\documentclass{llncs}

\usepackage{amsmath,amssymb,amsfonts}
\usepackage{caption}
\usepackage{subcaption}
\usepackage{csquotes}
\usepackage{float}
\usepackage{marvosym}
\usepackage{microtype}
\usepackage{multirow}

\usepackage[sort]{cleveref}

\usepackage{tikz}

\usetikzlibrary{arrows,snakes}

\crefformat{page}{#2page~#1#3}%
\Crefformat{page}{#2Page~#1#3}%
\crefformat{equation}{#2(#1)#3}%
\Crefformat{equation}{#2(#1)#3}%
\crefformat{figure}{#2Figure~#1#3}%
\Crefformat{figure}{#2Figure~#1#3}%
\crefformat{section}{#2Section~#1#3}
\Crefformat{section}{#2Section~#1#3}
\crefformat{appendix}{#2Appendix~#1#3}
\Crefformat{appendix}{#2Appendix~#1#3}
\crefformat{chapter}{#2Chapter~#1#3}
\Crefformat{chapter}{#2Chapter~#1#3}
\crefformat{chapter*}{#2Chapter~#1#3}
\Crefformat{chapter*}{#2Chapter~#1#3}
\crefformat{part}{#2Part~#1#3}
\Crefformat{part}{#2Part~#1#3}
\crefformat{enumi}{#2(#1)#3}
\Crefformat{enumi}{#2(#1)#3}

\newcommand*{\defeq}{\mathrel{\vcenter{\baselineskip0.5ex \lineskiplimit0pt
                     \hbox{\scriptsize.}\hbox{\scriptsize.}}}%
                     =}

% The plots generated by Google Sheets have an x-axis label
% that is already separated far from the plot itself.
% To avoid having too much white-space in total (which looks weird),
% we move up the caption just a little bit.
\newcommand{\googlevspace}{\vspace{-6mm}}

\newcommand{\shortd}{\ensuremath{\check{d}}}

\newcommand{\normal}{\mathcal{N}}
\newcommand{\expected}[1]{\mathbb{E}[#1]}
\newcommand{\N}{\mathbb{N}}
\newcommand{\Z}{\mathbb{Z}}

\DeclareMathOperator*{\median}{median}

\overfullrule=10mm

\graphicspath{{../imgs/}{imgs/}}

\title{TRIX: Low-Skew Pulse Propagation for Fault-Tolerant Hardware}

\author{
  Ben Wiederhake\thanks{Max Planck Institute for
      Informatics, Saarland Informatics Campus and Saarbrücken
      Graduate School of Computer Science, Germany; \texttt{bwiederh@mpi-inf.mpg.de}}
  \and
  Christoph Lenzen\thanks{Max Planck Institute for
      Informatics, Saarland Informatics Campus; \texttt{clenzen@mpi-inf.mpg.de}}
}
%\authorrunning{C. Lenzen et al.}
%\author{Christoph Lenzen\inst{1,3}\orcidID{0000-0002-3290-0674} \and
%Ben Wiederhake\inst{2,3}\orcidID{0000-0002-8712-9768}}
%\institute{
%Max Planck Institute for Informatics, Germany \email{clenzen@mpi-inf.mpg.de} \and{}
%Max Planck Institute for Informatics, Germany \email{bwiederh@mpi-inf.mpg.de} \and{}
%Saarland Informatics Campus, Germany}

\begin{document}

\maketitle

\begin{abstract}
\noindent
The vast majority of hardware architectures use a carefully timed reference signal to clock their computational logic.
However, standard distribution solutions are not fault-tolerant.
In this work, we present a simple grid structure as a more reliable clock propagation method and study it by means of simulation experiments.
Fault-tolerance is achieved by forwarding clock pulses on arrival of the second of three incoming signals from the previous layer.

A key question is how well neighboring grid nodes are synchronized, even without faults.
Analyzing the clock skew under typical-case conditions is highly challenging.
Because the forwarding mechanism involves taking the median, standard
probabilistic tools fail, even when modeling link delays just by unbiased coin flips.

Our statistical approach provides substantial evidence that this system performs surprisingly well.
Specifically, in an \enquote{infinitely wide} grid of height~$H$, the delay at a pre-selected node exhibits a standard deviation of $O(H^{1/4})$
($\approx 2.7$ link delay uncertainties for $H=2000$) and skew between adjacent nodes of $o(\log \log H)$
($\approx 0.77$ link delay uncertainties for $H=2000$).
We conclude that the proposed system is a very promising clock distribution method.
This leads to the open problem of a stochastic explanation of the tight concentration of delays and skews.
More generally, we believe that understanding our very simple abstraction of the
system is of mathematical interest in its own right.
% maximum 200 words
\keywords{pulse propagation \and clock tree replacement \and self-stabilizing hardware \and fault tolerance}
\end{abstract}

%\clearpage

\section{Introduction}\label{sec:intro}

When designing high reliability systems, any critical subsystem susceptible to
failure must exhibit sufficient redundancy.
Traditionally, clocking of synchronous systems is performed by clock trees or
other structures that cannot sustain faulty components~\cite{xanthopoulos09}.
This imposes limits on scalability on the size of clock domains; for instance,
in multi-processor systems typically no or only very loose synchronization is
maintained between different processors~\cite{culler98,patterson1990}.
Arguably, this suggests that fault-tolerant clocking methods that are
competitive -- or even better -- in terms of synchronization quality and other
parameters (ease of layouting, amount of circuitry, energy consumption, etc.)
would be instrumental in the design of larger synchronous systems.

% Apparently, some reviewers would like us to mention clock sync in wireless sensor networks.
% However, we're dealing with *pulse synchronization* in a *static* (except faults) network.
% So it's different problems in different models.  I don't see how this is related.

% For the sake of tractability, we stick to the example of clocking chips or
% multi-processor systems throughout this paper.
To the best of our knowledge, at least until~20, or perhaps
even~10 years ago, there is virtually no work on fault-tolerance of clocking schemes
beyond production from the hardware community; due to the size and degree of
miniaturization of systems at the time, clock trees and their derivatives were
simply sufficiently reliable in practice.
That this has changed is best illustrated by an upsurge of interest in single
event upsets of the clocking subsystem in the last
decade~\cite{abouzeid15,chipana14,chipana12,chipana11,gujja15,malherbe16,wang16,wissel09}.
However, with ever larger systems and smaller components in place, achieving
acceptable trade-offs between reliability, synchronization quality, and energy
consumption requires to go beyond these techniques.

On the other hand, there is a significant body of work on fault-tolerant
synchronization from the area of distributed systems.
Classics are the Srikanth-Toueg~\cite{srikanth87optimal} and
Lynch-Welch~\cite{welch88} algorithms, which maintain synchronization even in
face of a large minority (strictly less than one third) of \emph{Byzantine}
faulty nodes.\footnote{Distributed systems are typically modeled by network
graphs, where the nodes are the computational devices and edges represent
communication links. A Byzantine faulty node may deviate from the protocol in an
arbitrary fashion, i.e., it models worst-case faults and/or malicious attacks on
the system.} Going beyond this already very strong fault model, a line of
works~\cite{daliot03self,dolev07bounded,dolev14fatal,dolev04clock,lenzen19easyJ}
additionally consider \emph{self-stabilization,} the ability of a system to
recover from an unbounded number of transient faults.
The goal is for the system to stabilize, i.e., recover nominal operation, after
transient faults have ceased.
Note that the combination makes for a very challenging setting and results in
extremely robust systems: even if some nodes remain faulty, the system will
recover from transient faults, which is equivalent to recover synchronization
when starting from an arbitrary state despite interference from Byzantine faulty
nodes.

While these fault-tolerance properties are highly desirable, unsurprisingly they
also come at a high price.
All of the aforementioned works assume a fully connected system, i.e., direct
connections between each pair of nodes.
Due to the strong requirements, it is not hard to see that this is essentially
necessary~\cite{dolev86impossibility}:
in order to ensure that each non-faulty node can synchronize to the majority of
correct nodes, its degree must exceed the number of faults, or it might become
effectively disconnected.
In fact, it actually must have more correct neighbors than faulty ones, or a
faulty majority of neighbors might falsely appear to provide the correct time.
Note that emulating full connectivity using a crossbar or some other sparser
network topology defeats the purpose, as the system then will be brought down by
a much smaller number of faults in the communication infrastructure connecting
the nodes.
Accordingly, asking for such extreme robustness must result in solutions that do
not scale well.

A suitable relaxation of requirements is proposed
in~\cite{DBLP:journals/jcss/DolevFLPS16}.
Instead of assuming that Byzantine faults are also \emph{distributed} across the
system in a worst-case fashion, the authors of this work require that faults are
``spread out.'' More specifically, they propose a grid-like network they call
HEX, through which a clock signal can be reliably distributed, so long as for
each node at most one of its four in-neighbors is faulty.
Note that for the purposes of this paper, we assume that the problem of fault-tolerant clock signal
\emph{generation} has already been sufficiently addressed (e.g. using~\cite{dolev14fatal}), but the signal still needs to be
\emph{distributed}.
Provided that nodes fail independently, this means that the probability of
failure of individual nodes that can likely be sustained becomes roughly~$1/\sqrt{n}$,
where~$n$ is the total number of nodes; this is to be contrasted
with a system without fault-tolerance, in which components must fail with
probability at most roughly~$1/n$.
The authors also show how to make HEX self-stabilizing.
Unfortunately, however, the approach has poor synchronization performance even
in face of faults obeying the constraint of at most one fault in each
in-neighborhood.
While it is guaranteed that the clock signal propagates through the grid, nodes
that fail to propagate the clock signal cause a ``detour'' resulting in a clock
skew between neighbors of at least one maximum end-to-end communication delay
$d$.\footnote{$d$ includes the wire delay as well as the time required for local
computations. As the grid is highly uniform and links connect close-by nodes,
the reader should expect this value to be roughly the same for all links.} This
is much larger than the \emph{uncertainty} $u$ in the end-to-end delay:
As clocking systems can be (and are) engineered for this purpose, the end-to-end
delay will vary between $d-u$ and~$d$ for some $u\ll d$.
To put this into perspective, in a typical system~$u$ will be a fraction of a
clock cycle, while~$d$ may easily be half of a clock cycle or more.

This is inherent to the structure of the HEX grid, see \Cref{fig:hex_worst}.
It seeks to propagate the clock signal from layer to layer, where each node has
two in-neighbors on the preceding and two in-neighbors on their own layer.
Because the possibility of a fault requires nodes to wait for at least two
neighbors indicating a clock pulse before doing so themselves, a faulty node
refusing to send any signal implies that its two out-neighbors on the next layer
need to wait for at least one signal from their own layer.
This adds at least one hop to the path along which the signal is propagated,
causing an additional delay of at least roughly~$d$.

\paragraph*{Our Contribution}

%\begin{figure}
%  \centering
%  \input{tikz/hex_worst}
%  \captionof{figure}{A crashing node in a HEX grid causes a large skew between neighbors
%  in the same layer, even with all links having exactly the same delay. Thick
%  links cause nodes to pulse, dotted links mean that the transmitted pulse
%  was too late to be considered.}
%  \label{fig:hex_worst}
%\end{figure}

\begin{figure}
  \centering
  \begin{minipage}{.56\textwidth}
    \centering
    % python: [x * 1.5 for x in [1, math.cos(math.pi/6)]]
%\begin{tikzpicture}[xscale=1,yscale=0.86602,  % cos 30°
\begin{tikzpicture}[xscale=1.1547,yscale=1,
%\begin{tikzpicture}[xscale=1.4,yscale=1.21244,
hnode/.style={draw,circle,minimum size=0.8cm,fill=white},
linkdead/.style={->,dotted},
linkpass/.style={->,thick,dotted},
linkact/.style={->,thick}
]

% axes
\draw[<->]  (0,4) -- (0,0) -- (5,0);
\node[anchor=north] at (2.5,0) {column};
\node[rotate=90,anchor=south] at (0,2) {layer};

% axis labels
%\node at (1,-0.4) {$i-1$};
%\node at (2,-0.4) {$i$};
%\node at (3,-0.4) {$i+1$};
%\node[anchor=east, left=0.5em] at (0,1) {$\ell-1$};
%\node[anchor=east, left=1.25em] at (0,2) {$\ell$};
%\node[anchor=east, left=0.5em] at (0,3) {$\ell+1$};

% content nodes
% base layer
\node[hnode] (a1) at (1,1) {3d};
\node[hnode] (a2) at (2,1) {3d};
\node[hnode] (a3) at (3,1) {3d};
\node[hnode] (a4) at (4,1) {3d};

% layer with the fault
\node[hnode] (b1) at (0.5,2) {4d};
\node[hnode] (b2) at (1.5,2) {4d};
\node[hnode] (b3) at (2.5,2) {};
\node at (b3) {\Huge\Lightning};
\node[hnode] (b4) at (3.5,2) {4d};
\node[hnode] (b5) at (4.5,2) {4d};

% next layer
\node[hnode] (c1) at (1,3) {5d};
\node[hnode] (c2) at (2,3) {6d};
\node[hnode] (c3) at (3,3) {6d};
\node[hnode] (c4) at (4,3) {5d};

%\draw[dotted] (c2) -- (2,4);
%\draw[dotted] (c3) -- (3,4);

% links
\draw[linkact] (a1) -- (b1);
\draw[linkact] (a1) -- (b2);
\draw[linkact] (a2) -- (b2);
\draw[linkact] (a2) -- (b3);
\draw[linkact] (a3) -- (b3);
\draw[linkact] (a3) -- (b4);
\draw[linkact] (a4) -- (b4);
\draw[linkact] (a4) -- (b5);

\draw[linkact] (b1) -- (c1);
\draw[linkact] (b2) -- (c1);
\draw[linkact] (b2) -- (c2);
\draw[linkdead] (b3) -- (c2);
\draw[linkdead] (b3) -- (c3);
\draw[linkact] (b4) -- (c3);
\draw[linkact] (b4) -- (c4);
\draw[linkact] (b5) -- (c4);

\draw[linkpass] (a1) to[out=-20,in=-160] (a2); \draw[linkpass] (a2) to[out=160,in=20] (a1);
\draw[linkpass] (a2) to[out=-20,in=-160] (a3); \draw[linkpass] (a3) to[out=160,in=20] (a2);
\draw[linkpass] (a3) to[out=-20,in=-160] (a4); \draw[linkpass] (a4) to[out=160,in=20] (a3);

\draw[linkpass] (b1) to[out=-20,in=-160] (b2); \draw[linkpass] (b2) to[out=160,in=20] (b1);
\draw[linkpass] (b2) to[out=-20,in=-160] (b3); \draw[linkdead] (b3) to[out=160,in=20] (b2);
\draw[linkdead] (b3) to[out=-20,in=-160] (b4); \draw[linkpass] (b4) to[out=160,in=20] (b3);
\draw[linkpass] (b4) to[out=-20,in=-160] (b5); \draw[linkpass] (b5) to[out=160,in=20] (b4);

\draw[linkact] (c1) to[out=-20,in=-160] (c2); \draw[linkpass] (c2) to[out=160,in=20] (c1);
\draw[linkpass] (c2) to[out=-20,in=-160] (c3); \draw[linkpass] (c3) to[out=160,in=20] (c2);
\draw[linkpass] (c3) to[out=-20,in=-160] (c4); \draw[linkact] (c4) to[out=160,in=20] (c3);

\end{tikzpicture}
    \captionof{figure}{A crashing node in a HEX grid causes a large skew between neighbors
    in the same layer, even with all links having exactly the same delay. Thick
    links cause nodes to pulse, dotted links mean that the transmitted pulse
    was too late to be considered, and faint dotted links do not transmit a pulse.}
    \label{fig:hex_worst}
  \end{minipage}%
  \hfill{}%
  \begin{minipage}{.4\textwidth}
    \centering
    \begin{tikzpicture}[scale=1.304,
hnode/.style={draw,circle,minimum size=0.8cm,fill=white},
link/.style={->,thick}
]

% axes
\draw[<->]  (0.5,3.5) -- (0.5,0.5) -- (3.5,0.5);
\node[anchor=north] at (2,0.5) {column};
\node[rotate=90,anchor=south] at (0.5,2) {layer};

\foreach \x/\l in {1/a, 2/b, 3/c} {
    \foreach \y in {1, 2,3} {
        \node[hnode] (\l\y) at (\x,\y) {};
    }
}

% links
\draw[link] (a1) -- (b2);
\draw[link] (b1) -- (b2);
\draw[link] (c1) -- (b2);

\draw[link] (b2) -- (a3);
\draw[link] (b2) -- (b3);
\draw[link] (b2) -- (c3);

\path (0.5,0.5) -- (3.75,0.5); % Compensate for visual imbalance due to y-axis label

\end{tikzpicture}
    \captionof{figure}{The basic topology of TRIX. Only wires incident to the central node are shown;
    the same pattern is applied at other nodes.}
    \label{fig:trix_simple}
  \end{minipage}
\end{figure}

\begin{figure}
  \centering
  \begin{minipage}{.48\textwidth}
    \centering
    %\begin{tikzpicture}[scale=1,
\begin{tikzpicture}[scale=1.19,
%\begin{tikzpicture}[scale=1.304,
hnode/.style={draw,circle,minimum size=0.8cm,fill=white},
dead/.style={->,dotted,->},
fast/.style={->,thick},
slow/.style={->,decoration={snake,amplitude=.4mm,segment length=1mm,pre length=.5mm,post length=.5mm},decorate}
% Alternatively:
%slow/.style={->,ultrathick}
]

% axes
\draw[<->]  (0.5,4.5) -- (0.5,0.5) -- (4.5,0.5);
\node[anchor=north] at (2.5,0.5) {column};
\node[rotate=90,anchor=south] at (0.5,2.5) {layer};

% nodes
\foreach \x/\l in {1/a, 2/b, 3/c, 4/d} {
    \foreach \y\u in {1/0, 2/1, 3/2, 4/3} {
        \node[hnode] (\l\y) at (\x,\y) {$\u{}d$};
    }
}
\node[hnode] (b2) at (2,2) {};
\node (b2L) at (2,2) {\Huge\Lightning};

% links

% Straight up
\foreach \x in {a, c, d} {
    \foreach \y/\v in {1/2, 2/3, 3/4} {
        \draw[fast] (\x\y) -- (\x\v);
    }
}
\foreach \x in {b} {
    \foreach \y/\v in {1/2, 3/4} {
        \draw[fast] (\x\y) -- (\x\v);
    }
}

% Up and right
\foreach \x/\u in {a/b, c/d} {
    \foreach \y/\v in {1/2, 2/3, 3/4} {
        \draw[fast] (\x\y) -- (\u\v);
    }
}
\foreach \x/\u in {b/c} {
    \foreach \y/\v in {1/2, 3/4} {
        \draw[fast] (\x\y) -- (\u\v);
    }
}

% Up and left
\foreach \x/\u in {c/b, d/c} {
    \foreach \y/\v in {1/2, 2/3, 3/4} {
        \draw[fast] (\x\y) -- (\u\v);
    }
}
\foreach \x/\u in {b/a} {
    \foreach \y/\v in {1/2, 3/4} {
        \draw[fast] (\x\y) -- (\u\v);
    }
}

\draw[dead] (b2) -- (a3);
\draw[dead] (b2) -- (b3);
\draw[dead] (b2) -- (c3);

\path (0.5,0.5) -- (4.75,0.5); % Compensate for visual imbalance due to y-axis label

\end{tikzpicture}
    \captionof{figure}{A crashing node in a TRIX grid causes no significant skew,
    compared to \Cref{fig:hex_worst}. In fact, in absence of uncertainty,
    isolated crashes can be ignored entirely.}
    \label{fig:trix_singlefault}
  \end{minipage}%
  \hfill{}%
  \begin{minipage}{.48\textwidth}
    \centering
    %\begin{tikzpicture}[scale=1,
\begin{tikzpicture}[scale=1.19,
%\begin{tikzpicture}[scale=1.304,
hnode/.style={draw,circle,minimum size=0.8cm,fill=white},
fast/.style={->,thick},
slow/.style={->,decoration={snake,amplitude=.4mm,segment length=1mm,pre length=.5mm,post length=.5mm},decorate}
% Alternatively:
%slow/.style={->,ultrathick}
]

% axes
\draw[<->]  (0.5,4.5) -- (0.5,0.5) -- (4.5,0.5);
\node[anchor=north] at (2.5,0.5) {column};
\node[rotate=90,anchor=south] at (0.5,2.5) {layer};

% nodes
\foreach \x/\l in {1/a, 2/b} {
    \foreach \y\u in {1/0, 2/1, 3/2, 4/3} {
        \node[hnode] (\l\y) at (\x,\y) {$\u{}d$};
    }
}
\foreach \x/\l in {3/c, 4/d} {
    \foreach \y\u in {1/0, 2/1, 3/2, 4/3} {
        \node[hnode] (\l\y) at (\x,\y) {$\u{}\shortd$};
    }
}

% links

% Straight up
\foreach \x in {a, b} {
    \foreach \y/\v in {1/2, 2/3, 3/4} {
        \draw[slow] (\x\y) -- (\x\v);
    }
}
\foreach \x in {c, d} {
    \foreach \y/\v in {1/2, 2/3, 3/4} {
        \draw[fast] (\x\y) -- (\x\v);
    }
}

% Up and right
\foreach \x/\u in {a/b, b/c} {
    \foreach \y/\v in {1/2, 2/3, 3/4} {
        \draw[slow] (\x\y) -- (\u\v);
    }
}
\foreach \x/\u in {c/d} {
    \foreach \y/\v in {1/2, 2/3, 3/4} {
        \draw[fast] (\x\y) -- (\u\v);
    }
}

% Up and left
\foreach \x/\u in {b/a} {
    \foreach \y/\v in {1/2, 2/3, 3/4} {
        \draw[slow] (\x\y) -- (\u\v);
    }
}
\foreach \x/\u in {c/b, d/c} {
    \foreach \y/\v in {1/2, 2/3, 3/4} {
        \draw[fast] (\x\y) -- (\u\v);
    }
}

\path (0.5,0.5) -- (4.75,0.5); % Compensate for visual imbalance due to y-axis label

\end{tikzpicture}
    \captionof{figure}{Worst-case assignment of wire delays causing large skew for TRIX.
    Squiggly lines indicate slow wires, straight ones fast wires.
    The symbol $\shortd$ stands for $d-u$.}
    \label{fig:trix_worst}
  \end{minipage}
\end{figure}

We propose a novel clock distribution topology that overcomes the above
shortcoming of HEX.
As in our topology nodes have in- and out-degrees of~$3$ and it is inspired by
HEX, we refer to it as TRIX; see \Cref{fig:trix_simple} for an
illustration of the grid structure.
Similar to HEX, the clock signal is propagated through layers, but for each node,
all of its three in-neighbors are on the preceding layer.
This avoids the pitfall of faulty nodes significantly slowing down the
propagation of the signal. If at most one in-neighbor is faulty,
each node still has two correct in-neighbors on the preceding layer, as demonstrated in \Cref{fig:trix_singlefault}.
Hence, we can now focus on fault-free executions, because single isolated
faults only introduce an additional uncertainty of at most $u \ll d$.
Predictions in this model are therefore
still meaningful for systems with rare and non-malicious faults.

The TRIX topology is acyclic, which conveniently means that
self-stabilization is trivial to achieve, as any incorrect state is \enquote{flushed
out} from the system.

Despite its apparent attractiveness and even greater simplicity, we note that
this choice of topology should not be obvious.
The fact that nodes do not check in with their neighbors on the same layer
implies that the worst-case clock skew between neighbors grows as~$u H$, where
$H$ is the number of layers and (for the sake of simplicity) we assume that the
skew on the first layer (which can be seen as the ``clock input'') is~$0$, see
\Cref{fig:trix_worst}.
However, reaching the skew of~$d$ between neighbors on the same layer, which is
necessary to give purpose to any link between them, takes many layers, at least $d/u\gg 1$
many. This is in contrast to HEX, where the worst-case skew is bounded, but more easily attained.
When not assuming that delays are chosen in a worst-case manner, our statistical
experiments show it to likely take much longer before this threshold is reached.
Accordingly, in TRIX there is no need for links within the same layer for any
practical number of layers, resulting in the advantage of smaller in- and
out-degrees compared to HEX.

The main focus of this work lies on statistical experiments with the goal of
estimating the performance of a TRIX grid as clock distribution method.
Note that this is largely\footnote{Skews over longer distances are relevant for
long-range communication, but have longer communication delays and respective
uncertainties. This entails larger buffers even in absence of clock skew. We
briefly show that the TRIX grid appears to behave well also in this regard.}
dominated by the skew between adjacent nodes in the grid, as these will drive
circuitry that needs to communicate.
While the worst-case behavior is easy to understand, it originates from a very
unlikely configuration, where one side of the grid is entirely slow and the
other is fast, see \Cref{fig:trix_worst}.
In contrast, correlated but gradual changes will also result in spreading out
clock skews -- and any change that affects an entire region in the same way will
not affect local timing differences at all.
This motivates to study the extreme case of independent noise on each link in
the TRIX grid.
Choosing a simple abstraction, we study the random process in which each link is
assigned either delay~$0$ or delay~$1$ by an independent, unbiased coin flip.
Moreover, we assume ``perfect'' input, i.e., each node on the initial layer
signals a clock pulse at time~$0$, and that the grid is infinitely
wide.\footnote{Experiments with grids of bounded width suggest that reducing
width only helps, while our goal here is to study the asymptotic behavior for
large systems.} By induction over the layers, each node is then assigned the
second largest value out of the three integers obtained by adding the respective
link delay to the pulse time of each of its in-neighbors.
We argue that this simplistic abstraction captures the essence of (independent)
noise on the channels.

Due to the lack of applicable concentration bounds for such processes, we study
this random process by extensive numerical experiments.
Our results provide evidence that TRIX behaves surprisingly well in several regards,
exhibiting better concentration than one might expect.
First and foremost, the skew between neighbors appears to grow extremely slowly
with the number of layers~$H$.
Even for~$2000$ layers, we never observed larger differences than~$7$ between
neighbors on the same layer.
Plotting the standard deviation of the respective distribution as a function of
the layer, the experiments show a growth that is slower than \emph{doubly
logarithmic}, i.e.~$\log \log H$. Second, for a fixed layer, the respective skew distribution
exhibits an exponential tail falling as roughly $e^{-\lambda |x|}$ for $\lambda
\approx 2.9$.
Third, the distribution of pulsing times as a function of the layer (i.e., when
a node pulses, not the difference to its neighbors) is also concentrated around
its mean (which can easily be shown to be~$H/2$), where the standard deviation
grows roughly as~$H^{1/4}$.

To support that these results are not simply artifacts of the simulation, i.e.,
that we sampled sufficiently often, we make use of the
Dvoretzky-Kiefer-Wolfowitz (DKW) inequality~\cite{dvoretzky1956} to show that the
underlying ground truth is very close to the observed distributions. In
addition, to obtain tighter error bounds for our asymptotic analysis of
standard deviations as function of~$H$, we leverage the Chernoff bound (as
stated in~\cite{mitzenmacher05}) on individual values.
We reach a high confidence that the qualitative assertion that the TRIX
distribution exhibits surprisingly good concentration is well-founded.
We conclude that the simple mechanism underlying the proposed clock distribution
mechanism results in a fundamentally different behavior than existing clock
distribution methods or naive averaging schemes.

% \subsection*{Organization of this Article}
% First we define and discuss in \Cref{sec:model} the model, alongside the metrics
% we want to analyze.
% In \Cref{sec:tightdelay} and \Cref{sec:skewtight} we show that TRIX
% exhibits the behavior stated above regarding these metrics.
% We discuss potential sources of systematic errors in \Cref{sec:systematic} and
% how we safeguarded against them.
% Finally, \Cref{sec:conclude} concludes the paper.

\section{Preliminaries}\label{sec:model}
\label{subsec:model}
% For this submission, we need to save space desperately.
% TODO: In future submissions, make the label point to {Model}, not {Preliminaries}.
\paragraph*{Model}

We model TRIX in an abstract way that is amenable to very efficient simulation
in software.
In this section, we introduce this model and discuss the assumptions and the
resulting restrictions in detail.

The network topology is a grid of height~$H$ and width~$W$.
For finite~$W$, the left- and rightmost column would be connected, resulting in
a cylinder.
To simplify, we choose $W = \infty$, because we aim to focus on the behavior in
large systems.
Note that a finite width will work in our favor, as it adds additional
constraints on how skews can evolve over layers; in the extreme case of~$W=3$,
in absence of faults skews could never become larger than~$1$.
We refer to the grid nodes by integer coordinates $(x,y)$, where $x\in \Z$ and
$y\in \N_0$.
Layer $0\leq \ell \leq H$ consists of the nodes $(x,\ell)$, $x\in \Z$.

Layer~$0$ is special in that its nodes represent the clock source;
they always \emph{pulse} at time~$0$.
Again, there are implicit simplifications here.
First, synchronizing the layer~$0$ nodes requires a suitable solution -- ideally
also fault-tolerant -- and cannot be done perfectly.
However, our main goal here is to understand the properties of the clock
distribution grid, so the initial skew is relevant only insofar as it affects
the distribution.
Some indicative simulations demonstrate that the grid can counteract ``bad''
inputs to some extent, but some configurations do not allow for this in a few
layers, e.g.\ having all nodes with negative~$x$ coordinates pulse much earlier
than those with positive ones.
Put simply, this would be a case of ``garbage-in garbage-out,'' which is not the
focus of this study.
A more subtle point is the unrealistic assumption that all layers of the grid have the same width.
If the layer~$0$ nodes are to be well-synchronized, they ought to be physically
close; arranging them in a wide line is a poor choice.
Accordingly, arranging the grid in concentric rings (or a similar structure)
would be more natural.
This would, however, entail that the number of grid nodes per layer should
increase at a constant rate, in order to maintain a constant density of nodes
(alongside constant link length, etc.).
However, adding additional nodes permits to distribute skews \emph{better,}
therefore our simplification acts against us.

All other nodes $(x,\ell)$ for $\ell>0$ are TRIX nodes.
Each TRIX node propagates the clock signal to the three nodes ``above'' it,
i.e.,~the vertices $(x - 1, y + 1)$, $(x, y + 1)$, and $(x + 1, y + 1)$.
In the case of the clock generators, the signal is just the generated clock pulse;
in case of the TRIX nodes, this signal is the forwarded clock pulse.
Each node (locally) triggers the pulse, i.e., forwards the signal, when
receiving the second signal from its predecessors;
this way, a single faulty in-neighbor cannot cause the node's pulse to happen
earlier than the first correct in-neighbor's signal arriving or later than the
last such signal.

Pulse propagation over a comparatively long distance involves delays, and our
model focuses on the uncertainty on the wires.
Specifically, we model the wire delays using i.i.d.~random variables that are
fair coin flips, i.e., attain the values~$0$ or~$1$ with probability~$1/2$
each.\footnote{This model choice is restrictive in that it deliberately neglects
correlations.
A partial justification here is the expectation that (positive) correlations
are unlikely to introduce local ``spikes'' in pulse times, i.e., large skews.
However, we acknowledge that this will require further study, which must be
based on realistic models of the resulting physical implementations.}
This reflects that any (known expected) absolute delay does not matter, as the
number of wires is the same for any path from layer~$0$ (the clock generation layer) to layer $\ell>0$; also,
this normalizes the uncertainty to~1.
% If anyone asks why the vertical wire isn't faster than the others:
% If would be indeed, if it was straight.  But we would make the actual wire squiggly, so that the delays are uniform.

Formalizing the above, for each wire from the nodes $(x + c, y)$ with $c \in
\{-1, 0, +1\}$ to node $(x,y + 1)$, we define~$w_{c}$ to be the wire delay.
We further define the time $t_{c} \defeq d(x + c, y) + w_{c}$ at which node $(x,
y + 1)$ receives the clock pulse.
Then node $(x, y + 1)$ fires a clock pulse at the median time $t \defeq
\median\{t_{-1}, t_{0}, t_{+1}\}$.
% Intuitively speaking, this is the time when the second pulse (out of three) arrive.
As we assume that all clock generators (i.e., nodes with $y=0$) fire, by
induction on~$y$ all $d(x, y)$ are well-defined and finite.

This model may seem idealistic, especially our choice of the wire delay distribution.
However, in \Cref{sec:systematic} we argue that this is not an issue.

We concentrate on two important metrics to analyze this system:
absolute delay and relative skew.
% Both are only meaningful when all clock generators fire a clock pulse at time~0.
The total delay $d(x, y)$ of a node $(x, y)$ (usually with $y=H$) is the time at
which this node fires.
The relative skew $s^{\delta}(x, y)$ in horizontal distance $\delta$ is the
difference in total delay; i.e., $s^{\delta}(x, y) \defeq d(x + h, y) - d(x,
y)$.
Our main interests are the random variables $d(H) \defeq d(0,H)$, i.e.~the
delay at the top, and $s(H) \defeq s^{1}(0, H)$, i.e.~the
relative skew between neighboring nodes.

\paragraph*{Further Notation}

We write $\normal(\mu, \sigma^{2})$ to denote the normal
distribution with mean $\mu$ and standard deviation $\sigma$.

Given the average sample value $\bar{v}=\sum_{i=1}^n v_i/n$ over a set of~$n$ sample values~$v_{i}$,
we compute the \emph{empiric} standard deviation as
$(\sum_{i=1}^{n} (v_{i} - \bar{v})^2 / (n - 1))^{1/2}$.

We denote the cumulative distribution function of a random variable~$X$
as~$C[X]$; e.g.~the term $C[X](x)$ denotes the probability that $X \leq x$.  We
denote the inverse function by~$C^{-1}[X]$.

A quantile-quantile-plot relates two distributions $A$ and~$B$.  The
simplest definition is to plot the domain of~$A$ against the domain of~$B$,
using the function $C^{-1}[B](C[A](x))$.

We define the sample space $\Omega$ as the set of possible specific assignments
of wire delays. If we need to refer to the value of some random variable~$X$ in
a sample $s \in \Omega$, we write~$X[s]$.

\section{Delay is Tightly Concentrated}\label{sec:tightdelay}

\subsection{Empiric Analysis}\label{subsec:delay_empiric}

In this subsection, we examine $d(2000)$, which we formally defined in \Cref{subsec:model}.
Recall that $d(2000)$ is the delay at layer~2000.
The results are similar for other layers and, as we will show in
\Cref{subsec:delayasymp}, do change slowly with increasing~$H$.

For reference, consider a simpler system consisting of a sequence of nodes
arranged in a line topology, where each node transmits a pulse once receiving
the pulse from its predecessor.
In this system, the delay at layer~$H$ would follow a binomial distribution with
mean~$H/2=1000$ and standard deviation $\sqrt{H}/2=\sqrt{2000}/2 \approx 22$.
Recall that by the central limit theorem, for large~$H$ this distribution will
be very close to a normal distribution with the same mean and standard
deviation.
In particular, it will have an exponential tail.

\Cref{fig:delay_l2k_pmf} shows the estimated probability mass function
of~$d(2000)$. The data was gathered using 25~million independent
simulations; in \Cref{sec:systematic} we discuss how we ensured that the
simulations are correct.

Observe that the shape looks like a normal distribution, as one might expect. 
However, it is concentrated much more tightly around its mean than for the simple
line topology considered above: The empiric standard deviation of this sample is
only~2.741.

\begin{figure}
  \centering
  \begin{minipage}{.49\textwidth}
    \centering
    \includegraphics[width=\linewidth]{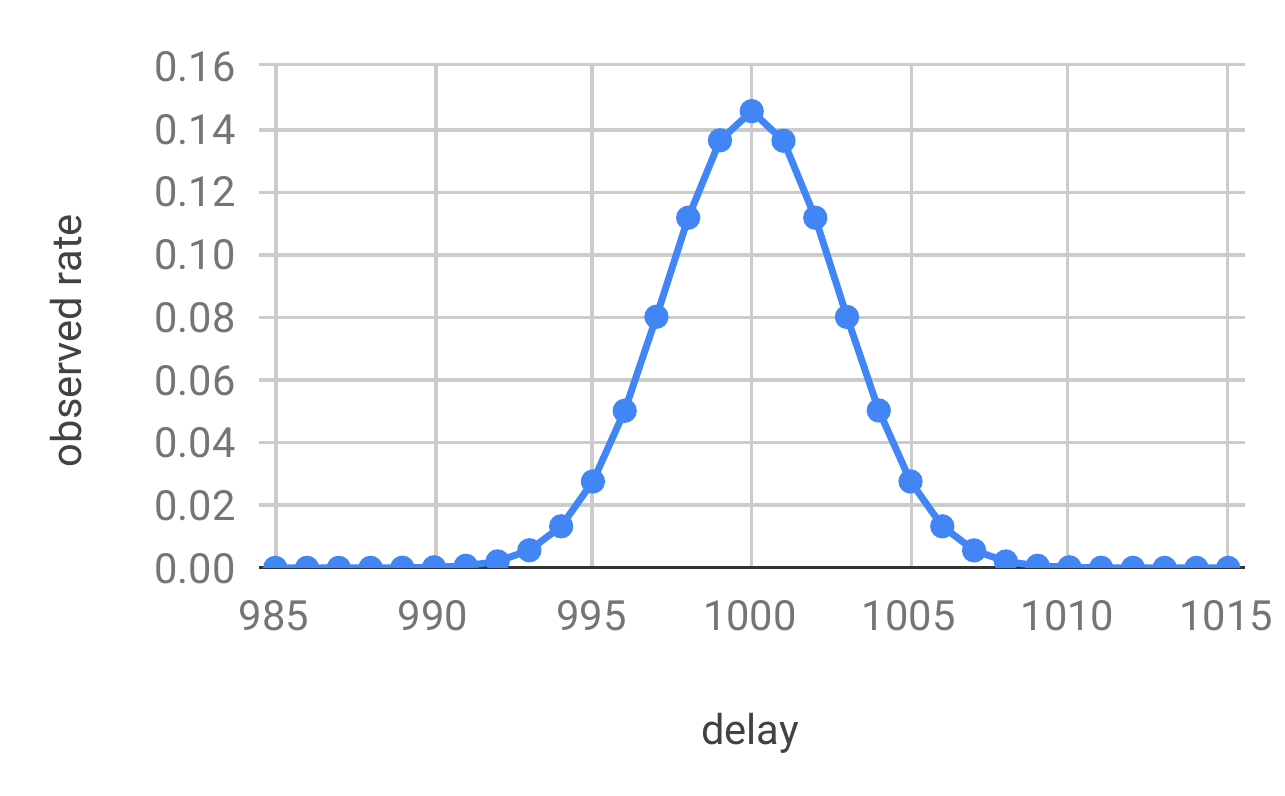}
    \googlevspace{}
    \captionof{figure}{Estimated probability mass function of~$d(2000)$.}
    \label{fig:delay_l2k_pmf}
  \end{minipage}\hfill%
  \begin{minipage}{.49\textwidth}
    \centering
    \includegraphics[width=\linewidth]{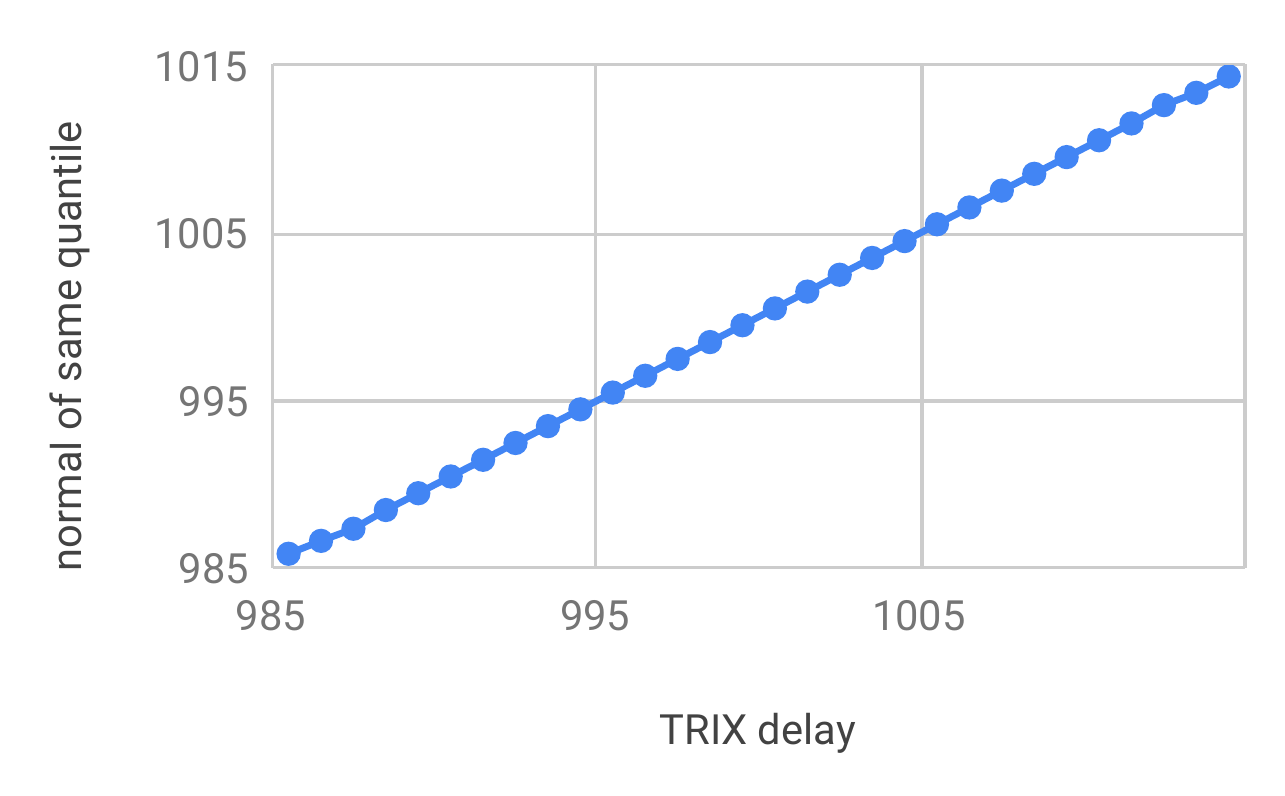}
    \googlevspace{}
    \captionof{figure}{Quantile-quantile plot of~$d(2000)$ against $N(1000,2.741^{2})$.}
    \label{fig:delay_l2k_qq}
  \end{minipage}
\end{figure}

\Cref{fig:delay_l2k_qq} uses a quantile-quantile-plot to show
that $d(2000)$ and $\normal(1000, 2.741^{2})$ seem to be close to
identical, as indicated by the fact that the plot is close to a straight line.
The extremes are an exception, where numerical and
uncertainty issues occur. Of course, they cannot be truly identical, even if our guess~$2.741$ was correct:
$d(2000)$~is discrete and has a bounded support, in contrast to $\normal(1000, 2.741^{2})$.
However, this indicates some kind of connection we would like to understand better.

The Dvoretzky-Kiefer-Wolfowitz inequality~\cite{dvoretzky1956} implies that the true
cumulative distribution function must be within~$0.0003255$ (i.e., $8139$~evaluations)
of our measurement with probability $1 - \alpha = 99\%$.
For the values with low frequency however, the Dvoretzky-Kiefer-Wolfowitz inequality
yields weak error bounds.

Instead, we use that Chernoff bounds can be applied
to the random variable~$X_k$ given by sum of variables~$X_{i,k}$ indicating
whether the $i$-th evaluation of the distribution attains value~$k$.
We then vary~$p_k$, the unknown probability that the underlying distribution
attains value~$k$, and determine the threshold~$p_{\max}$ at which Chernoff's
bound shows that $p_k\geq p_{\max}$ implied that our observed sample had an
a-priori probability of at most $\alpha'$;
the same procedure is used to determine the threshold~$p_{\min}$ for which
$p_k\leq p_{\min}$ would imply that the observed sample has a-priori probability
at most $\alpha'$.
Note that we can also group together multiple values of~$k$ into a single bucket
and apply this approach to the frequency of the overall bucket.
This can be used to address all values with frequency~$0$ together.
Finally, we chose $\alpha'$ suitably such that a union bound over all
buckets\footnote{That is, the number of non-zero values to which we do not apply
Dvoretzky-Kiefer-Wolfowitz plus one (for values of frequency~$0$).}
yields the desired probability bound of $1-\alpha=99\%$ that \emph{all}
frequencies of the underlying distribution are within the computed error bounds.

Due to our large number of samples, the resulting error bars for the probability
mass function are so small that they cannot be meaningfully represented in
\Cref{fig:delay_l2k_pmf}; in fact, on the interval $[990, 1010]$ the error bars
are at most~$8.05\%$ (multiplicative, not additive), and on the
interval $[994,1006]$ at most~$0.93\%$.
On the other hand, data points outside $[990, 1010]$ in \Cref{fig:delay_l2k_qq}
should be taken as rough indication only.

In other words, we have run a sufficient number of simulations to conclude that
the ground truth is likely to be very close to a binomial (i.e., essentially
normal) distribution with mean $H/2=1000$ and standard deviation close to
$2.741$; the former is easily shown, which we do next.

\subsection{Stochastic Analysis}

\begin{lemma}\label{lem:avgdelayishalflayers}
    $\expected{d(H)} = H/2$.
\end{lemma}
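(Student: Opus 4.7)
The plan is to prove $\mathbb{E}[d(H)] = H/2$ by a symmetry argument on the wire-delay distribution rather than by directly computing the expected median (which is awkward, since the three incoming values are not independent and the joint distribution at a given layer is not explicitly known).

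First I would exploit the fact that each wire delay $w$ is uniform on $\{0,1\}$, hence has a distribution invariant under the involution $w \mapsto 1 - w$. I would define the ``flipped'' sample $s'$ obtained from $s \in \Omega$ by replacing every wire delay $w_{c}$ by $1 - w_{c}$. Since the wire delays are i.i.d.\ fair coin flips, the map $s \mapsto s'$ is a measure-preserving bijection on~$\Omega$, so any random variable $X$ and its pullback $X[s'] = X \circ (s \mapsto s')$ have the same distribution, and in particular the same expectation.

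The heart of the argument is then an induction on the layer $y$, with the claim that $d(x, y)[s'] = y - d(x, y)[s]$ for every $x \in \Z$ and every $y \in \N_{0}$. The base case $y = 0$ is immediate because all clock sources pulse at time $0$. For the inductive step I would use the two elementary identities $\median(a + k, b + k, c + k) = \median(a, b, c) + k$ and $\median(-a, -b, -c) = -\median(a, b, c)$, which combine to give $\median(k - a, k - b, k - c) = k - \median(a, b, c)$. Applying this with $k = y + 1$ to the triple $(d(x + c, y)[s] + w_{c})$ and using the inductive hypothesis yields $d(x, y + 1)[s'] = (y + 1) - d(x, y + 1)[s]$.

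The claim then implies that $d(x, y)$ and $y - d(x, y)$ are equal in distribution, so $\mathbb{E}[d(x, y)] = y - \mathbb{E}[d(x, y)]$, whence $\mathbb{E}[d(H)] = H/2$. I do not expect a real obstacle here: the only thing to be careful about is verifying that the median identity is correctly applied under the inductive substitution, and that the bijection $s \mapsto s'$ is used on all wires simultaneously (including those above layer $y$), so that the marginal distribution at layer $y$ is genuinely preserved.
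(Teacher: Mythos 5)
Your proposal is correct and follows essentially the same route as the paper: the delay-flipping involution $s \mapsto s'$, the measure-preservation argument from the i.i.d.\ fair coin flips, the induction showing $d(x,y)[s'] = y - d(x,y)[s]$, and the median identity $\median\{k-a,k-b,k-c\} = k - \median\{a,b,c\}$ all appear in the paper's proof in the same roles. No gaps.
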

\begin{proof}
    Consider the bijection $f\colon \Omega \to \Omega$ on the sample space
    given by $f(s)=\bar{s}$, i.e., we exchange all delays of~$0$ for delays of
    $1$ and vice versa.
    We will show that for a sample~$s$ with $d(2000)[s] = \delta$ it holds
    that $d(2000)[f(s)] = H - \delta$.
    As the wire delays are u.i.d., all points in $\Omega$ have the
    same weight under the probability measure, implying that this is sufficient
    to show that $\expected{d(H)} = H/2$.

    We prove by induction on~$y$ that $d(x,
    y)[f(s)] = y - d(x, y)[s]$ for all $0\leq y\leq H$ and $x\in \Z$;
    by the above discussion, evaluating this claim at $(x,y)=(0,H)$ completes
    the proof.
    For the base case of $y=0$, recall that $d(x, 0)[f(s)] = d(x, 0)[s] = 0$ by
    definition.
    
    For the step from~$y$ to~$y+1$, consider the node at $(x, y + 1)$ for $x\in
    \Z$.
    For $c \in \{-1, 0, +1\}$, the wire delay~$w_c$ from $(x+c,y)$ to $(x,y+1)$
    satisfies $w_{c}[f(s)] = 1 - w_{c}[s]$ by construction.
    By the induction hypothesis, $d(x + c, y)[f(s)] = y - d(x + c, y)[s]$.
    Together, this yields that $(x,y+1)$ receives the pulse from $(x+c,y)$ at
    time
    \begin{align*}
    d(x + c, y)[f(s)] + w_c[f(s)]
    &= y + 1 - (d(x + c, y)[s] + w_c[s]).
    \end{align*}
    We conclude that
    \begin{align*}
    &d(x,y+1)[f(s)]
    =\,\median_{c\in \{-1,0,1\}}\{y + 1 - (d(x + c, y)[s] + w_c[s])\}\\
    &=\, y + 1 - \median_{c\in
    \{-1,0,1\}}\{d(x + c, y)[s] + w_c[s]\}
    = y + 1 - d(x,y+1)[s]. % FIXME \qedhere
    \end{align*}
\end{proof}

\subsection{Asymptotics in Network Depth}\label{subsec:delayasymp}

As discussed earlier, forwarding the pulse signal using a line topology would
result in~$d(H)$ being a binomial distribution with mean~$H/2$ and standard
deviation $\sqrt{H}/2$.
For~$d(2000)$, we observe a standard deviation that is smaller by about an order of
magnitude.

\begin{figure}
  \centering
  \begin{minipage}{\textwidth}
    \centering
    \includegraphics[width=\linewidth]{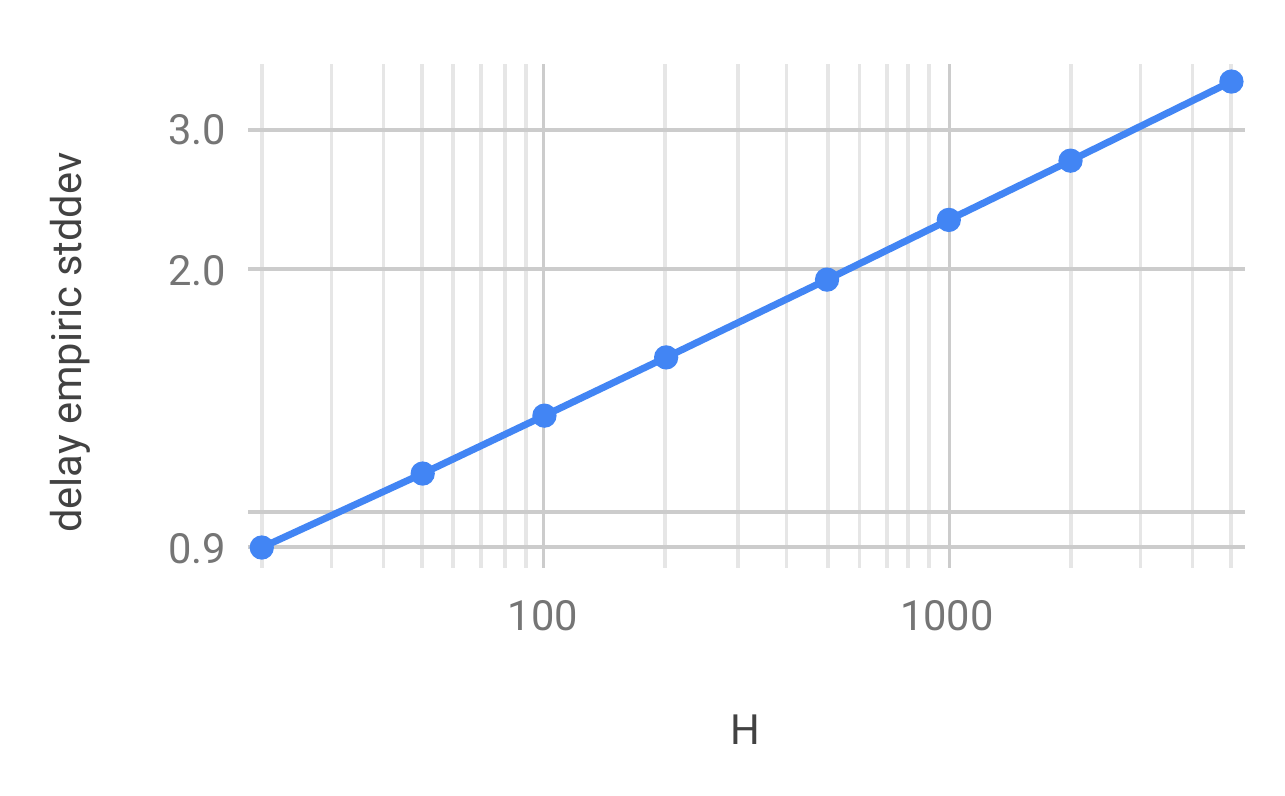}
    \googlevspace{}
    \captionof{figure}{Log-log plot of the empiric standard deviation of~$d(H)$ as a function of~$H$.}
    \label{fig:delay_stddev_asymptotic_loglog}
  \end{minipage}
\end{figure}

Running simulations and computing the empiric standard deviation
for various values of~$H$ resulted in the data plotted in
\Cref{fig:delay_stddev_asymptotic_loglog} as a log-log plot.
Using the technique discussed above, we can compute error bars.\footnote{
This relies on the exponential tails demonstrated in \cref{subsec:delay_empiric}.
Without this additional observation, the error bars for all possible skews (including large skews like $H/2$)
would cause large uncertainty in the standard deviation.}
Again, the obtained error bounds cannot be meaningfully depicted;
errors are about~$1\%$ (multiplicative, not additive) with probability
$1-\alpha=99\%$.
The largest error margin is at~200 layers, with~$1.35\%$.

\Cref{fig:delay_stddev_asymptotic_loglog} suggests a polynomial relationship between
standard deviation $\sigma$ and grid height~$H$.
The slope of the line is close to~$1/4$, which suggests 
$\sigma \sim H^{\beta}$ with $\beta\approx 1/4$.

This is a quadratic improvement over the reference case of a line topology.

\section{Skew is Tightly Concentrated}\label{sec:skewtight}

\subsection{Empiric Analysis}\label{subsec:skew_empiric}

In this subsection, we study $s(2000)$, which we formally defined in \Cref{subsec:model}.
Recall that $s(2000)$ is the skew at layer~2000 between neighboring nodes.
As we will show in \Cref{subsec:skewasymp}, the behavior for other layers is
very similar.
In particular, the skews increase stunningly slowly with~$H$.

We gathered data from 20~million simulations\footnote{Curiously, we saw skew -7 exactly once, skew -6 never, skew +6 four times. Further investigation showed this to be a fluke, but we want to avoid introducing bias by picking the \enquote{nicest} sample.},
and see a high concentration around~$0$ in \Cref{fig:skew_l2k_pmf_linlog}, with
roughly half of the probability mass at~$0$.
Note that again the error bars cannot be represented meaningfully in \Cref{fig:skew_l2k_pmf_linlog};
in fact, on the interval $[-3, +3]$ the error bars are at most~$6.1\%$ (multiplicative, not additive),
and on the interval $[-2, +2]$ at most~$1.4\%$.

Observe that the skew does not follow a normal distribution at all:
The probability mass seems to drop off exponentially like $e^{-\lambda |x|}$ for
$\lambda \approx 2.9$ (where~$x$ is the skew), and not quadratic-exponentially
like $e^{-x^{2}/(2\sigma^2)}$, as it would happen in the normal distribution.
The probability mass for~$0$ is a notable exception, not matching this behavior.

The Dvoretzky-Kiefer-Wolfowitz inequality~\cite{dvoretzky1956} implies that
the true cumulative distribution function must be within~$0.0005147$
(i.e., $5147$~evaluations) of our measurement with probability $1 - \alpha = 99\%$.
In particular, observing skew~$6$ twice without observing skew $-6$ is well
within error tolerance.

\begin{figure}
  \centering
  \begin{minipage}{.48\textwidth}
    \centering
    \includegraphics[width=\linewidth]{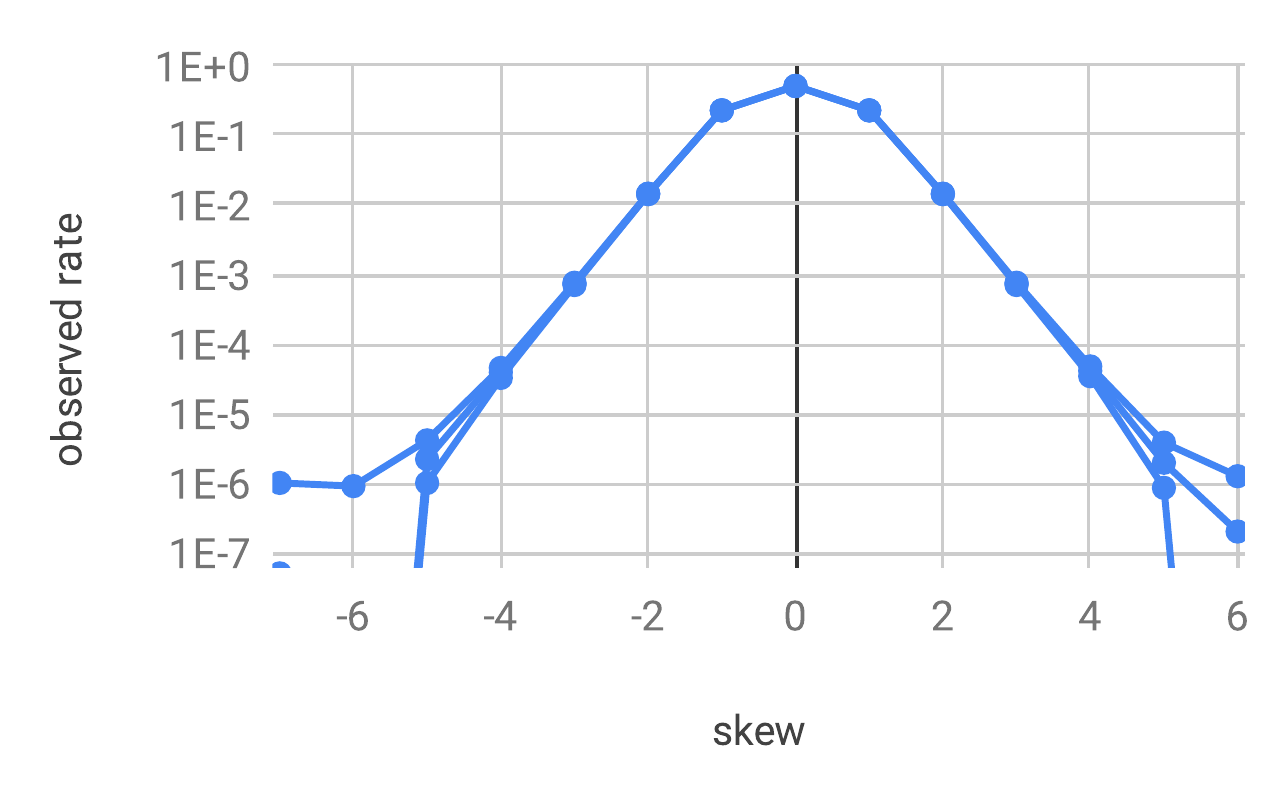}
    \googlevspace{}
    %\caption{The data of \cref{fig:skew_l2k_pmf} with logarithmic y-axis. The error bars are only visible at the fringes.}
    \caption{The estimated probability mass function for~$s(2000)$, with a logarithmic y-axis. The error bounds are only visible at the fringes.}
    \label{fig:skew_l2k_pmf_linlog}
  \end{minipage}\hfill%
  \begin{minipage}{.48\textwidth}
    \centering
    \includegraphics[width=\linewidth]{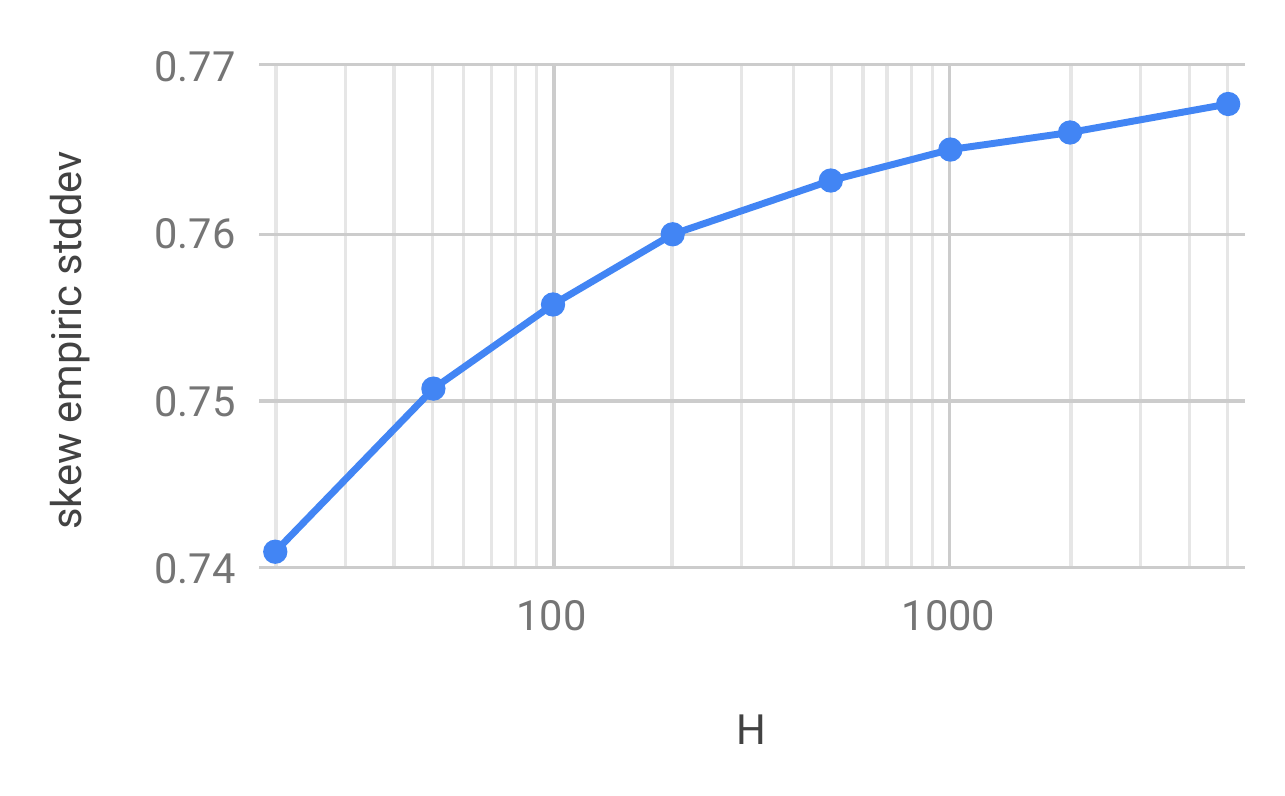}
    \googlevspace{}
    \caption{Empiric standard deviation of~$s(H)$ as a function of~$H$, as a
    log-lin plot.\\~}
    \label{fig:skew_stddev_asymptotic_loglin}
  \end{minipage}\hfill%
\end{figure}

\subsection{Stochastic Analysis}

First, we observe that the skew is symmetric with mean~$0$.
This is to be expected, as the model is symmetric.
It also readily follows using the same argument as used in the proof of
\Cref{lem:avgdelayishalflayers}.
\begin{corollary}\label{cor:symmetric}
$s(H)$ is symmetric with $\expected{s(H)}=0$.
\end{corollary}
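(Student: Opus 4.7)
The plan is to mirror the proof of \Cref{lem:avgdelayishalflayers}, replacing the delay-flipping bijection on~$\Omega$ by a horizontal reflection bijection. Concretely, I would define $g\colon \Omega \to \Omega$ by reflecting each sample across the vertical line $x = 1/2$: for every wire from $(x+c, y)$ to $(x, y+1)$ in $s$, the wire from $(1-x-c, y)$ to $(1-x, y+1)$ in $g(s)$ receives the same delay. Since the TRIX topology is invariant under the reflection $x \mapsto 1-x$ (each node still has three in-neighbors shifted by $c \in \{-1,0,+1\}$), the map $g$ is well-defined, an involution, and therefore a measure-preserving bijection on $\Omega$ with respect to the uniform product measure on i.i.d.\ fair coin flips.

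Next, I would prove by induction on~$y$ that $d(x, y)[g(s)] = d(1-x, y)[s]$ for all $x\in \Z$ and $0\leq y\leq H$. The base case $y=0$ holds trivially since every layer-$0$ node pulses at time~$0$. For the inductive step, the pulse time of $(x,y+1)$ in $g(s)$ is the median over $c\in\{-1,0,+1\}$ of $d(x+c,y)[g(s)] + w_c[g(s)]$. By the induction hypothesis, $d(x+c,y)[g(s)] = d(1-x-c,y)[s]$, and by construction the wire delay $w_c[g(s)]$ equals the delay of the wire from $(1-x-c,y)$ to $(1-x,y+1)$ in~$s$. Substituting $c'=-c$ and noting that the median is invariant under reindexing, this sum matches exactly the median defining $d(1-x,y+1)[s]$.

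Specializing the claim to $(x,y) \in \{(0,H),(1,H)\}$ yields $d(0,H)[g(s)] = d(1,H)[s]$ and $d(1,H)[g(s)] = d(0,H)[s]$, so
\begin{equation*}
s(H)[g(s)] \;=\; d(1,H)[g(s)] - d(0,H)[g(s)] \;=\; d(0,H)[s] - d(1,H)[s] \;=\; -s(H)[s].
\end{equation*}
Because $g$ is a measure-preserving bijection, $s(H)$ and $-s(H)$ have identical distributions; hence the distribution is symmetric and $\expected{s(H)} = -\expected{s(H)} = 0$.

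I do not foresee a real obstacle here: the argument is routine once the correct bijection is chosen. The only subtlety, and the part I would take care to write out cleanly, is the reindexing in the inductive step that shows the median at $(x,y+1)$ in $g(s)$ really does coincide with the median at $(1-x,y+1)$ in $s$; this is where one must be explicit that the reflection maps the offset $c$ to $-c$ and relies on the median being a symmetric function of its arguments.
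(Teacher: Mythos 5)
Your proof is correct, but it uses a different bijection than the paper. The paper simply reuses the delay-complementation map $f$ from the proof of \Cref{lem:avgdelayishalflayers} (swapping every wire delay $0 \leftrightarrow 1$): since that proof already established $d(x,y)[f(s)] = y - d(x,y)[s]$, the skew $d(x,y)-d(x+1,y)$ negates because the $y$ terms cancel, and the corollary follows in two lines. You instead introduce a horizontal reflection $x \mapsto 1-x$ of the wire-delay configuration and rerun the induction; your bookkeeping of the offset $c \mapsto -c$ and the resulting identity $d(x,y)[g(s)] = d(1-x,y)[s]$ is right, and specializing at $x\in\{0,1\}$ does negate $s(H)$. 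The trade-off: the paper's route is shorter because it piggybacks on work already done, while yours is self-contained and in one respect more robust --- the reflection argument only needs the wire delays to be i.i.d., whereas the delay-flip bijection additionally requires the delay distribution to be invariant under $w \mapsto 1-w$ (which holds for fair coin flips but would fail for, say, a biased coin). Your phrasing of measure preservation via the product measure is also slightly more careful than the paper's ``all points have the same weight,'' which is only loosely meaningful on the infinite-width sample space.
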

\begin{proof}
Consider the bijection used in the proof of
\Cref{lem:avgdelayishalflayers}.
As shown in the proof, for any $(x,y)\in \Z\times \N_0$ and $s\in \Omega$, we
have that
\begin{align*}
d(x,y)[f(s)]-d(x+1,y)[f(s)]
=\,&y-d(x,y)[s]-(y-d(x+1,y)[s])\\
=\,& -(d(x,y)[s]-d(x+1,y)[s]). % FIXME \qedhere
\end{align*}
\end{proof}

Next we prove that the worst-case skew on layer~$H$ is indeed~$H$,
c.f.~\Cref{fig:trix_worst}.

\begin{lemma}\label{lem:extremeskewpossible}
    There is a sample~$s$ such that $s(H)[s] = H$.
\end{lemma}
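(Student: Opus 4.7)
The plan is to exhibit an explicit wire-delay assignment witnessing the extremal skew, matching the informal picture in \Cref{fig:trix_worst}: make everything to the left of the cut slow and everything to the right of the cut fast. Concretely, for every node $(x,y+1)$ with $x\geq 1$, set all three incoming wire delays to $1$, and for every node $(x,y+1)$ with $x\leq 0$, set all three incoming wire delays to $0$. Call the resulting sample $s$.

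I then claim by induction on $y$ that under $s$ we have $d(x,y)=0$ for all $x\leq 0$ and $d(x,y)=y$ for all $x\geq 1$. The base case $y=0$ is immediate from the definition of clock generators. For the inductive step, the only interesting cases are the two columns bordering the cut, namely $x=0$ and $x=1$; away from the cut all three in-neighbors lie on the same side, all three incoming wires share the same delay, and the claim is trivial. For $x=1$, the in-neighbors are $(0,y),(1,y),(2,y)$ with delays $0,y,y$ respectively; the three incoming wires all carry delay~$1$, so the arrival times are $1,y+1,y+1$, whose median is $y+1$ as required. For $x=0$, the in-neighbors are $(-1,y),(0,y),(1,y)$ with delays $0,0,y$; the incoming wires all carry delay $0$, so the arrival times are $0,0,y$, whose median is $0$.

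Applied at layer $H$, this gives $d(0,H)[s]=0$ and $d(1,H)[s]=H$, hence $s(H)[s]=d(1,H)[s]-d(0,H)[s]=H$.

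The only real subtlety is the handling of the two boundary columns, which is where the median operation actually does non-trivial work; the \enquote{wrong} in-neighbor across the cut contributes an out-of-place arrival time, but since only one of the three times is off, the median ignores it. Everything else is a routine induction, and no additional tools beyond the model definition are needed.
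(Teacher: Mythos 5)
Your proposal is correct and follows exactly the same construction as the paper's proof (delay $1$ on all wires into nodes with positive $x$, delay $0$ on all wires into nodes with non-positive $x$, then induction on the layer); you merely spell out the boundary-column cases of the induction that the paper leaves as ``a simple proof by induction.'' No gaps.
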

\begin{proof}
    In~$s$, we simply let all wire delays~$w_{i}$ of wires leading
    to a node with positive~$x$ be~$1$, and let all wire delays~$w_{j}$ of
    wires leading into a node with non-positive~$x$ be~$0$.
    A simple proof by induction shows that for positive~$x$ we get $d(x,
    y)[s_{1}] = y$ and for non-positive~$x$ we get $d(x, y)[s_{1}] = 0$.
    We conclude that $s(H)[s] = d(1, y)[s]-d(0,y)[s] = H$.
\end{proof}

The constructed sample is not the only one which exhibits large skew.
For example, simultaneously changing the delay of all wires between $x=0$ and
$x=1$ does not affect times when nodes pulse.
Moreover, for all $x\notin \{0,1\}$, we can concurrently change the delay of any
one of their incoming wires without effect.
It is not hard to see that the total probability mass of the described
samples is small. We could not find a way to show that this is true in general;
however, the experiments from \Cref{subsec:skew_empiric}
strongly suggests that this is the case. (Hence our question: How to approach statistical problems like this?)

\subsection{Asymptotics in Network Depth}\label{subsec:skewasymp}

Again we lack proper models to describe the asymptotic behavior, and instead
calculate the empiric standard deviation from sufficiently many simulations.

\begin{figure}
  \centering
  \begin{minipage}{.48\textwidth}
    \centering
    \includegraphics[width=\linewidth]{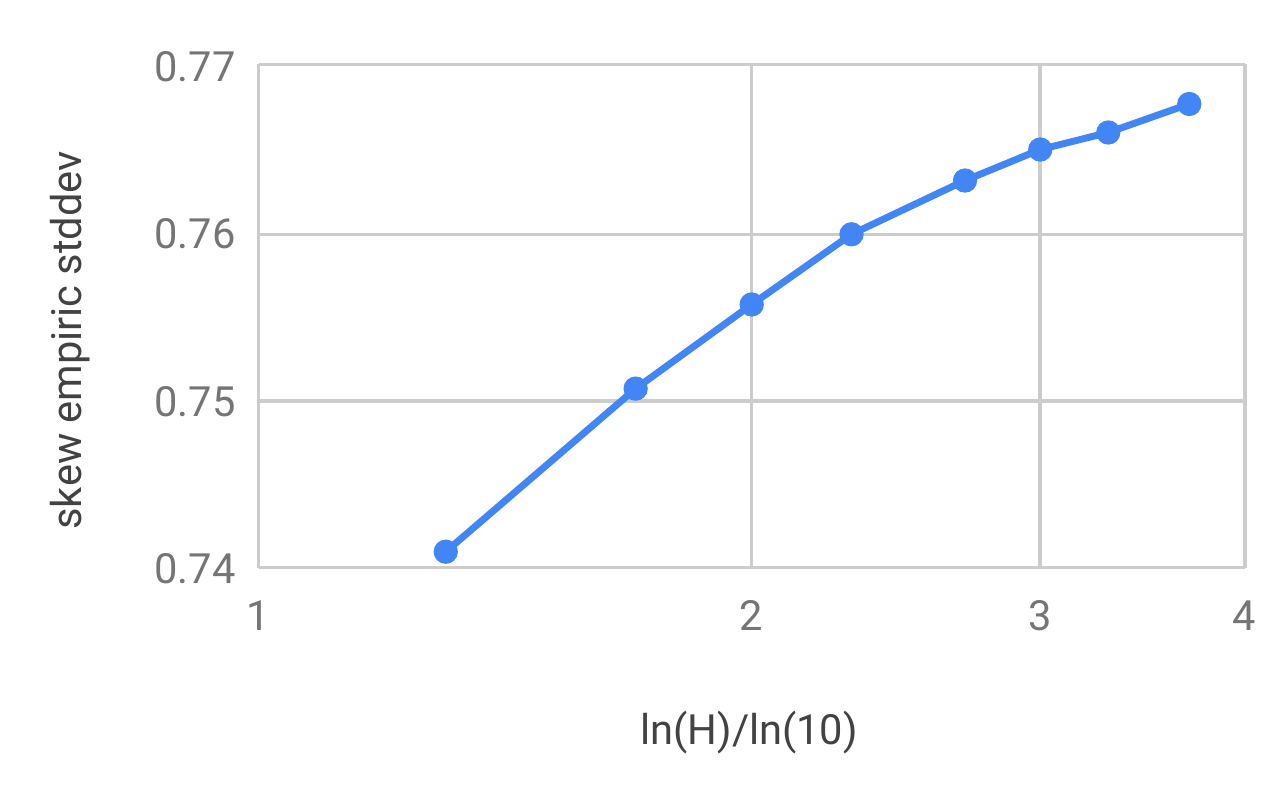}
    \googlevspace{}
    \caption{The data of \cref{fig:skew_stddev_asymptotic_loglin} in a loglog-lin plot.\\~}
    \label{fig:skew_stddev_asymptotic_logloglin}
  \end{minipage}%
  \hfill{}%
  \begin{minipage}{.48\textwidth}
    \centering
    \includegraphics[width=\linewidth]{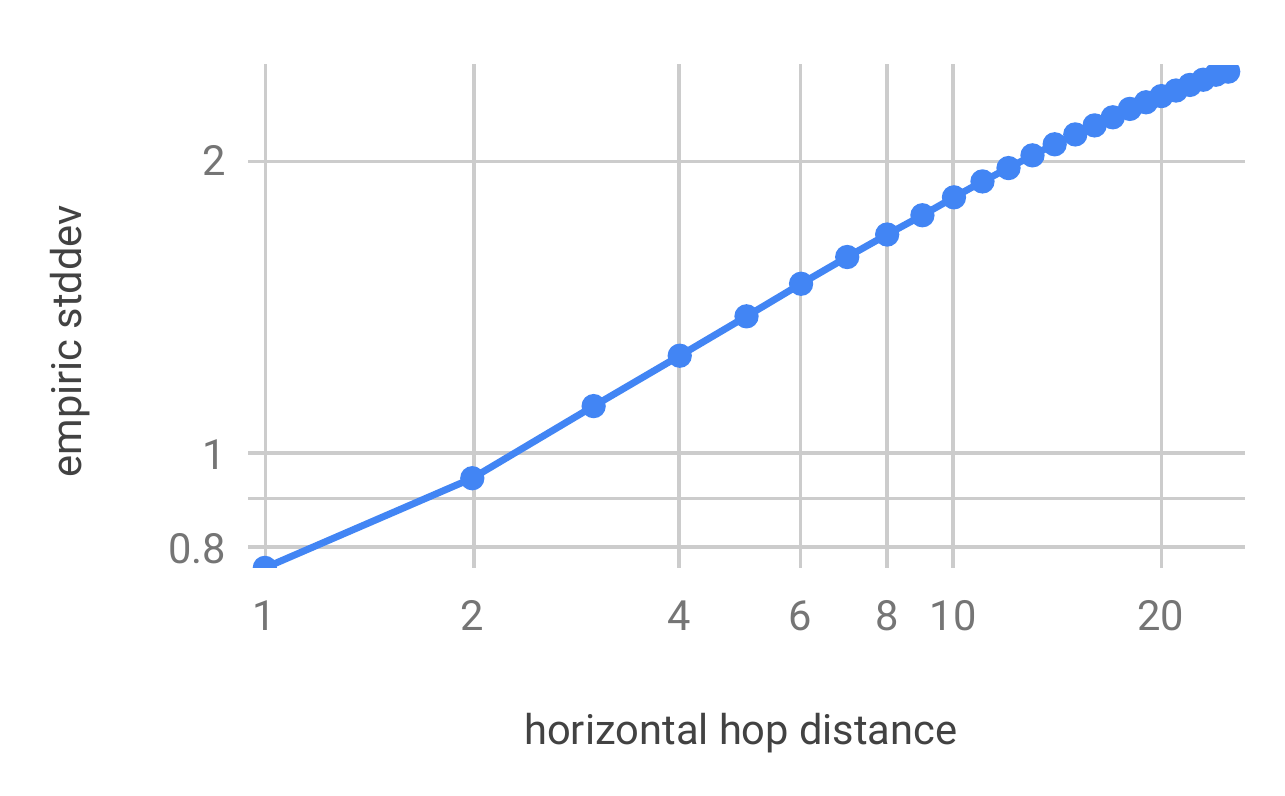}
    \googlevspace{}
    \caption{Empiric standard deviation of $s^{\delta}(0,500)$ as a function of
    horizontal distance~$\delta$ in a log-log plot.}
    \label{fig:skew_stddev_dlin_loglog}
  \end{minipage}
\end{figure}

\Cref{fig:skew_stddev_asymptotic_loglin} shows that the skew remains small even
for large values of~$H$.\footnote{A standard deviation below $1$~link delay uncertainty is not
unreasonable: Two circuits clocked over independent links, hop-distance~$1$ from a perfect clock
source, have $\sigma(skew)=\sqrt{\mathbb{E}[0.5^{2}]} = 0.5$ uncertainties.}
Note that the X-axis is logarithmic.
Yet again, showing the error bars calculated using Chernoff bounds and DKW (Dvoretzky-Kiefer-Wolfowitz) is
not helpful, as we get relative errors of about $0.6\%$ with $99\%$ confidence.
The largest error margin is at~100 layers and below, with an error about~$0.95\%$.

This suggests that the standard-deviation of~$s(H)$ grows strongly
sub-loga\-rithmi\-cally, possibly even converges to a finite value.
In fact, the data indicates a growth that is significantly slower than
logarithmic:
in \Cref{fig:skew_stddev_asymptotic_logloglin}, the x-axis is doubly logarithmic.
Thus, the plot suggests that $s(H)\in O(\log \log H)$.

Note that if we pretended that adjacent nodes exhibit independent delays,
the skew would have the same concentration as the delay.
In contrast, we see that adjacent nodes are tightly synchronized,
implying strong correlation.\footnote{For the sake of brevity, we do not show
respective plots.}

\subsection{Asymptotics in Horizontal Distance}\label{subsec:skewdlin}

So far, we have limited our attention to the skew between neighboring nodes.
In contrast, at horizontal distances $\delta\geq 2H$, node delays
are independent, as they do not share any wires on any path to any clock
generator.
Therefore, the skew would be given by independently sampling twice from the
delay distribution and taking the difference.
It remains to consider how skews develop for smaller horizontal distances.

In \Cref{fig:skew_stddev_dlin_loglog}, we see that the skew grows steadily with
increasing $\delta$.
These simulations were run with $H=500$ for higher precision and execution speed.
Note that the standard deviation of the skew converges to roughly $\sqrt{2}\cdot
1.95\approx 2.75$, i.e., $\sqrt{2}$ times the standard deviation of the delay
at $H=500$, as $\delta$ approaches $2\cdot 500 = 1000$.
This holds because the sum of independent random variables has variance equal to the
sum of variances of the variables.
As before, depicting the error bars determined by our approach is of no use,
as relative errors are about~$1.6\%$.
The largest error margin is at horizontal distance~25, with~$1.79\%$.

The small relative errors indicate that the log-log representation of the data
is meaningful, and suggests that
the standard deviation increases roughly proportional to
$\delta^{\gamma}$ for $\gamma \approx 1/3$.
It is not surprising that the slope falls off towards larger values, as we know
that the curve must eventually flatten and become constant for $\delta \ge 2H
= 1000$.
Overall, we observe that the correlation of skew at a distance is stronger than expected,
specifically $\gamma \approx 1/3$ instead of the expected $\gamma \approx 1/2$ for small
$\delta$.

\section{Conclusion}\label{sec:conclude}

In this work, we studied the behavior of the TRIX grid under u.i.d.\ link delays,
using statistical tools.
Our results clearly demonstrate that the TRIX grid performs much better than
one would expect from naive solutions, and thus should be considered
when selecting a fault-tolerant clock propagation mechanism.

Concretely, our simulation experiments show that the delay as function of the distance~$H$ from the clock source
layer is close to normally distributed with a standard deviation that grows only
as roughly~$H^{1/4}$, a qualitative change from e.g.\ a line topology that would
not be achieved by averaging or similar techniques.
Moreover, the skew between neighbors in the same layer is astonishingly small.
While not normally distributed, there is strong evidence for an exponential
tail, and the standard deviation of the distribution as function of~$H$ appears
to grow as $O(\log \log H)$.
Checking the skew over larger horizontal counts~$d$ (within the same layer) shows a
less surprising pattern.
However, still the increase as function of~$d$ appears to be slightly slower
than $\sqrt{d}$.

These properties render the TRIX grid an attractive candidate for
fault-tolerant clock propagation, especially when compared to clock trees.
We argue that our results motivate further investigation, considering correlated
delays based on measurements from physical systems as well as simulation of
frequency and/or non-white phase noise.
In addition, the impact of faults, physical realization as less regular grid
with a central clock source, and the imperfection of the input provided by the
clock source need to be studied.

Last but not least, we would like to draw attention to the open
problem of analyzing the stochastic process we use as an abstraction for
TRIX. This is also the reason why we lack a purely mathematical analysis.
While our simulation experiments are sufficient to demonstrate that the exhibited behavior is
highly promising, gaining an understanding of the underlying cause would allow
making qualitative and quantitative predictions beyond the considered setting.
As both the nodes' decision rule and the topology are extremely simple, one may
hope for a general principle to emerge that can also be applied in different
domains.

\clearpage

\appendix
\section{Potential Systematic Errors}\label{sec:systematic}

In this section we discuss possible sources of systematic errors in our
simulations and how we guarded against them.

\subsection{Bugs}

Since all experiments are software simulations, measurements have to be taken to insure against bugs.
In this regard, there are several arguments to be made, some involving the Random Number Generator (RNG):

\begin{itemize}
    \item We cross-validated four different implementations: (1) A very simple
    Python implementation that uses system randomness; (2) a slightly more
    involved Python implementation that exhaustively enumerates all possible
    wire delay combinations; (3) a straight-forward C implementation using system
    randomness; and (4) an optimized C implementation with the slightly weaker
    RNG \enquote{\texttt{xoshiro512starstar}}~\cite{DBLP:journals/corr/abs-1805-01407}.
    \item Even though only implementation (4) was fast enough to be used to
    generate the bulk of the results, all implementations agree on the probability
    distributions of delay and skew for examples they can handle. Implementation (2) can
    only run up to layer~3 ($H \leq 3$), implementations (1) and (3) were used up to
    around layers~100 and~1000, respectively.
    \item Implementation (4) is short (200 lines, plus about
    100 lines for the RNG~\cite{DBLP:journals/corr/abs-1805-01407}), and is simple enough
    to be inspected manually.
    \item Multiple machines were used, so hardware failure can be ruled out with
    sufficient confidence.
\end{itemize}

\subsection{Randomness and Model}

\begin{itemize}
    \item Tests with a number of weak RNGs showed that TRIX seems
    to be robust against this kind of deviation.
    \item Explorative simulations and validation simulations were kept strictly separate.
    \item For most discussions we assumed $H=2000$, because this definitely covers all practical applications.
    In fact, we expect that many applications only need $H=200$ or even $H=20$.
    In this paper we use a large value for~$H$ to show that the observed
    behavior is not a fluke that occurs due to low~$H$, but that the growth of
    delay and skew as function of~$H$ is indeed asymptotically slow.
    \item Using a larger domain only scales the result linearly, as expected.
    \item Using different wire delay models (e.g.\ choosing uniformly from
    $\{0, 0.5, 1\}$ instead of $\{0, 1\}$) does not significantly
    change the result, and in fact improves it slightly, as expected.
    \item We only focused on fault-free executions. Observe that single isolated
    faults only introduce an additional uncertainty of at most~1 (recall the
    normalization~$u = 1$). As faults are (supposed to be) rare and not
    maliciously placed, this means that the predictions for fault-free systems
    have substantial and meaningful implications also for systems with faults.
%     Furthermore, modeling and quantifying faults makes predictions even more difficult;
%     this would not help our point that predicting the simple \enquote{iterative median} scheme is already surprisingly difficult.
\end{itemize}

\section{Raw Data}

\noindent In the following, we provide the data for all graphs.

\newcommand{\appdelaypmfqq}{
\begin{center}
\begin{tabular}{rr||rr}
\multicolumn{2}{c||}{\Cref{fig:delay_l2k_pmf}} & \multicolumn{2}{c}{\Cref{fig:delay_l2k_qq}} \\
x (delay) & y (rate) & x (delay) & y (normal) \\ \hline
985 & 0.00000012 & 985.5 & 985.843 \\
986 & 0.00000040 & 986.5 & 986.615 \\
987 & 0.00000140 & 987.5 & 987.338 \\
988 & 0.00001076 & 988.5 & 988.457 \\
989 & 0.00004740 & 989.5 & 989.460 \\
990 & 0.00019060 & 990.5 & 990.462 \\
991 & 0.00066280 & 991.5 & 991.457 \\
992 & 0.00206788 & 992.5 & 992.464 \\
993 & 0.00561240 & 993.5 & 993.470 \\
994 & 0.01324480 & 994.5 & 994.472 \\
995 & 0.02753772 & 995.5 & 995.475 \\
996 & 0.05006436 & 996.5 & 996.479 \\
997 & 0.08002964 & 997.5 & 997.486 \\
998 & 0.11158960 & 998.5 & 998.492 \\
999 & 0.13625144 & 999.5 & 999.498 \\
1000 & 0.14553656 & 1000.5 & 1000.503 \\
1001 & 0.13611860 & 1001.5 & 1001.508 \\
1002 & 0.11158608 & 1002.5 & 1002.515 \\
1003 & 0.07996644 & 1003.5 & 1003.520 \\
1004 & 0.05016040 & 1004.5 & 1004.526 \\
1005 & 0.02753824 & 1005.5 & 1005.531 \\
1006 & 0.01324832 & 1006.5 & 1006.537 \\
1007 & 0.00557100 & 1007.5 & 1007.542 \\
1008 & 0.00205200 & 1008.5 & 1008.545 \\
1009 & 0.00066300 & 1009.5 & 1009.545 \\
1010 & 0.00018904 & 1010.5 & 1010.552 \\
1011 & 0.00004660 & 1011.5 & 1011.556 \\
1012 & 0.00001044 & 1012.5 & 1012.650 \\
1013 & 0.00000144 & 1013.5 & 1013.385 \\
1014 & 0.00000044 & 1014.5 & 1014.363 \\
1015 & 0.00000008 &  &  \\
\end{tabular}
\end{center}
}

\newcommand{\appdelayskewasymptotic}{
\begin{center}
\begin{tabular}{r|r|r}
& \Cref{fig:delay_stddev_asymptotic_loglog} & \Cref{fig:skew_stddev_asymptotic_loglin,fig:skew_stddev_asymptotic_logloglin} \\ % left column: and fig:delay_stddev_asymptotic
x ($H$) & y (stddev delay) & y (stddev skew) \\ \hline
20 & 0.9012352 & 0.74096549 \\
50 & 1.1147817 & 0.75071102 \\
100 & 1.3166986 & 0.75573837 \\
200 & 1.5567596 & 0.75993333 \\
500 & 1.9468302 & 0.76314438 \\
1000 & 2.3115292 & 0.76499614 \\
2000 & 2.7406959 & 0.76601516 \\
5000 & 3.4405614 & 0.76771794 \\
\end{tabular}
\end{center}
}

\newcommand{\appskewpmf}{
\begin{center}
\begin{tabular}{rccc}
\multicolumn{4}{c}{\Cref{fig:skew_l2k_pmf_linlog}} \\ % and fig:skew_l2k_pmf
x (skew) & lowerb. rate & observed rate & upperb. rate \\ \hline
-7 & 0.0000000 & 0.0000001 & 0.0000010 \\
-6 & 0.0000000 & 0.0000000 & 0.0000009 \\ % This datapoint sucks
-5 & 0.0000010 & 0.0000022 & 0.0000041 \\
-4 & 0.0000324 & 0.0000389 & 0.0000452 \\
-3 & 0.0007020 & 0.0007322 & 0.0007578 \\
-2 & 0.0142418 & 0.0143777 & 0.0144897 \\
-1 & 0.2283952 & 0.2287592 & 0.2291231 \\
0 & 0.5120736 & 0.5124375 & 0.5128015 \\
1 & 0.2281477 & 0.2285116 & 0.2288756 \\
2 & 0.0142291 & 0.0143650 & 0.0144769 \\
3 & 0.0007024 & 0.0007326 & 0.0007583 \\
4 & 0.0000343 & 0.0000410 & 0.0000475 \\
5 & 0.0000009 & 0.0000020 & 0.0000038 \\
6 & 0.0000000 & 0.0000002 & 0.0000013 \\
\end{tabular}
\end{center}
}

\newcommand{\appskewdlin}{
\begin{center}
\begin{tabular}{rc}
\multicolumn{2}{c}{\Cref{fig:skew_stddev_dlin_loglog}} \\ % and fig:skew_dlin,fig:skew_dlin_loglog
hop dist. & empiric stddev \\
1 & 0.76334 \\
2 & 0.94346 \\
3 & 1.11873 \\
4 & 1.26012 \\
5 & 1.38309 \\
6 & 1.49367 \\
7 & 1.59118 \\
8 & 1.67803 \\
9 & 1.75621 \\
10 & 1.83219 \\
11 & 1.90233 \\
12 & 1.96331 \\
13 & 2.02323 \\
14 & 2.07683 \\
15 & 2.12580 \\
16 & 2.17155 \\
17 & 2.21355 \\
18 & 2.25797 \\
19 & 2.29356 \\
20 & 2.32732 \\
21 & 2.35812 \\
22 & 2.38959 \\
23 & 2.41912 \\
24 & 2.44794 \\
25 & 2.46569 \\
\end{tabular}
\end{center}
}

%\appdelaypmfqq{}
%\appskewdlin{}

~\\

\noindent
\begin{minipage}{.59\textwidth}
    \appdelaypmfqq{}
\end{minipage}%
\begin{minipage}{.39\textwidth}
    \appskewdlin{}
\end{minipage}

\clearpage

\appdelayskewasymptotic{}
\appskewpmf{}

\bibliographystyle{splncs04}
\bibliography{references}

\end{document}